\renewcommand{\algocf@caption@boxruled}{%
			\unhbox\algocf@capbox\hfill
}%
\renewcommand{\@algocf@capt@plain}{above}
\DeclareMathOperator{\diag}{diag}
\newtheorem{proposition}{Proposition}
\theoremstyle{remark}\newtheorem{remark}{Remark}
\newcommand{\minimize}{{\rm minimize}}
\def\ccalH{{\ensuremath{\mathcal H}}}
\def\ccalT{{\ensuremath{\mathcal T}}}
\begin{document}
	
	\title{Power System State Estimation via Feasible Point Pursuit: Algorithms and Cram{\' e}r-Rao Bound}
	
	
	\author{
		Gang Wang, \IEEEmembership{Student Member,~IEEE,}
		Ahmed S. Zamzam, \IEEEmembership{Student Member,~IEEE,}\\
		Georgios B. Giannakis,~\IEEEmembership{Fellow,~IEEE,} and
		Nicholas D. Sidiropoulos,~\IEEEmembership{Fellow,~IEEE}
		\thanks{The work of G. Wang and G. B. Giannakis was supported in part by NSF grants 1423316, 1442686, 1508993, and 1509040. The work of A. S. Zamzam and N. D. Sidiropoulos was supported in part by NSF grants 1231504 and 1525194. G. Wang, A. S. Zamzam, G. B. Giannakis, and N. D. Sidiropoulos are with the Digital Tech. Center and the ECE Dept., U. of Minnesota, Mpls, MN 55455, USA.
			G. Wang is also with State Key Lab. of Intelligent Control and Decision of Complex Systems, Beijing Inst. of Tech., Beijing 100081, China.
			E-mails: \{gangwang,\,ahmedz,\,georgios,\,nikos\}@umn.edu. 
			}
	}

	\maketitle
\begin{abstract}
Accurately monitoring the system's operating point is central to the reliable and economic operation of an electric power grid. Power system state estimation (PSSE) aims to obtain complete voltage magnitude and angle information at each bus given a number of system variables at selected buses and lines. Power flow analysis is a special case of PSSE, and amounts to solving a set of noise-free power flow equations. Physical laws dictate quadratic relationships between available quantities and unknown voltages, rendering general instances of power flow and PSSE nonconvex and NP-hard. Past approaches are largely based on gradient-type iterative procedures or semidefinite relaxation (SDR). Due to nonconvexity, the solution obtained via gradient-type schemes depends on initialization, while SDR methods do not perform as desired in challenging scenarios. This paper puts forth novel \emph{feasible point pursuit} (FPP)-based solvers for power flow and PSSE, which iteratively seek feasible solutions for a nonconvex quadratically constrained quadratic programming (QCQP) reformulation of the weighted least-squares (WLS) problem. Relative to the prior art, the developed solvers offer superior performance at the cost of higher complexity. Furthermore, they converge to a stationary point of the WLS problem. As a baseline for comparing different estimators, the Cram{\' e}r-Rao lower bound (CRLB) is derived for the fundamental PSSE problem in this paper. Judicious numerical tests on several IEEE benchmark systems showcase markedly improved performance of our FPP-based solvers for both power flow and PSSE tasks over popular WLS-based Gauss-Newton iterations and SDR approaches.
\end{abstract}
	
	\begin{keywords}
		Power flow analysis, state estimation, nonconvex QCQP, feasible point pursuit, CRLB.
	\end{keywords}
	
	\section{Introduction}
	\label{sec:intro}

	Recognized as the greatest engineering achievement of the $21$st century 
	\cite{nae-report}, the electric power grid is a complex cyber-physical system comprising multiple subsystems, each with a transmission infrastructure to deliver electricity
	from power generators to distribution networks to customers.
	Accurately monitoring the operational condition of a power grid is crucial to various system control and optimization tasks, which include unit commitment, optimal power flow (OPF), and economic dispatch~\cite{AburExpositoBook,book2016conejo}. To enable such an accurate monitoring, a set of system variables are specified (and enforced) or measured at selected buses and lines for determining or estimating the system's operating point, namely complex voltages at all buses of the grid. These two tasks correspond to the so-termed power flow analysis and power system state estimation (PSSE), respectively.
	Both are central to monitoring, control, and future planning of electricity networks.
	
	
	In power engineering, power flow analysis is a numerical analysis of the normal steady-state flow of electric power over the grid, that is crucial for planning future power system expansions (e.g., designing components such as generators, lines, transformers, and capacitors), as well as in determining the best operation of the existing systems~\cite{book2016conejo}. The goal of power flow analysis is to obtain complete voltage magnitude and angle information at each bus for specified or enforced load and generator active power and voltage conditions~\cite{book2016conejo}. Once this information is available, other system variables including active and reactive power flows as well as generator reactive power outputs can be analytically obtained.

	Power flow analysis amounts to solving a set of quadratic equations given by the nonlinear AC power flow model obeying Ohm's and Kirchhoff's laws. Solving power flow equations for both transmission and distribution systems is known to be NP-hard \cite{acfeasibility}. 
	Due to the nonlinear nature,
	several numerical solvers have been developed to obtain a solution that is within an acceptable tolerance \cite{book2016conejo,psse2016madani}. 
	Past solvers include the Gauss-Seidel and Newton-Raphson iterative algorithms~\cite{book2016conejo}, and the semidefinite relaxation (SDR)~\cite{psse2016madani,psse2016zhang}.
	The Gauss-Seidel method is reported as the earliest devised power flow solver~\cite{book2016conejo}. On the other hand,
	the Newton-Raphson algorithm iteratively seeks improved approximations to the zeros of real-valued functions, featuring quadratic convergence whenever the initial point lands within a small neighborhood of the zeros~\cite{1967newton}. As convergence of both algorithms relies heavily on the initial point, they may diverge if the initialization is not reliable \cite{psse2016zhang}.
	With a carefully designed objective function and sufficiently small angle differences across lines, the SDR approaches have been shown capable of recovering the true power flow solution provided that the set of available specifications includes all voltage magnitudes, and the active power flows over a spanning tree of the network~\cite{psse2016madani,psse2016zhang}.

	
	The task of PSSE can be described as estimating the voltage magnitudes and angles at all buses across the network from a subset of
	supervisory control and data acquisition (SCADA)
	measurements including active and reactive power injections and flows (at both the sending and receiving ends), as well as squared voltage magnitudes~\cite{AburExpositoBook}.
	Since its appearance in the $1970$s~\cite{Schweppe70}, PSSE has
	become a prerequisite for 
	supervisory control, system planning, and economic dispatch~\cite{ Schweppe70}.
	Nonlinear SCADA measurements however, render the PSSE problem nonconvex and NP-hard in general \cite{jstsp2014zhu}.
	
	PSSE solvers so far are largely based on Gauss-Newton iterations and SDR heuristics.
	The ``workhorse'' Gauss-Newton method for nonconvex optimization has two limitations~\cite[Sec.\,1.5]{Be99}, i.e.,
	sensitivity to the initial guess, and lack of convergence guarantees.
	SDR-based approaches on the other hand solve first for a matrix variable that can be computationally expensive~\cite{jstsp2014zhu,icassp2014skgwgg,pesgm2014gwskgg,psse2016zhang,psse2016madani,icassp2017kekatos}. SDR's performance degrades when the data-size is relatively small, or when the data do not include all voltage magnitudes~\cite{psse2016zhang,psse2016madani}. For PSSE of large-scale networks, distributed Gauss-Newton and SDR implementations have been reported in \cite{jstsp2014zhu}, \cite{tps2016mll}, \cite{tsp2015cn}.
	Solving power flow equations and the PSSE can be shown equivalent to solving nonconvex QCQPs, which is generally NP-hard \cite{hardproblems}.
	Many heuristics have been put forward.
	A feasible point pursuit (FPP) algorithm developed in \cite{spl2015sidiropoulos} was shown to enjoy improved  performance over the SDR-based methods. The FPP heuristic has been employed for solving the OPF problem \cite{tsg2016zamzam}, where the resulting solver was empirically shown more effective for multi-phase transmission networks than popular SDR- and moment relaxation-based ones \cite{tsg2016zamzam}. 
	
	Building on our precursors~\cite{
	spl2015sidiropoulos,globalsip2016wzgs} and inspired by the inherent nonconvex challenge,
	the goal of this work is to develop power flow and PSSE solvers capable of attaining or approximating the global optimum at manageable computational complexity. Starting with the WLS formulation, the power flow and PSSE tasks are equivalently reformulated as a nonconvex QCQP, which can be readily tackled by FPP. 
	We further show that our FPP-based solvers converge to a stationary point of the WLS  problem. As a baseline for comparing different SE approaches, the Cram{\' e}r-Rao lower bound (CRLB) is derived for the basic PSSE problem under additive white Gaussian noise (AWGN). This is achieved by means of Wirtinger's calculus for functional analysis over complex domains. Finally, numerical experiments using several IEEE benchmark systems corroborate the superior performance of our proposed solvers over existing methods for both power flow and PSSE tasks.
	
	Regarding notation, matrices (vectors) are denoted by upper- (lower-) case boldface letters, and $ \overline{(\cdot)} $, $(\cdot)^\ccalT$, and $(\cdot)^\ccalH$ stand for complex conjugate, transpose, and conjugate-transpose, respectively. Calligraphic letters are reserved for sets, e.g., $\mathcal{N}$. Symbol ${\Re}\{\cdot\}$ (${\Im}\{\cdot\}$) takes the real (imaginary) part of a complex-valued object, and ${\rm diag}(\bm{x})$ is a diagonal matrix holding in order entries of $\bm{x}$ on its diagonal. 

	\section{System Modeling and Problem Statement}
	\label{sec:problem}
	
	An electric transmission network having $N$ nodes (buses) and $E$ edges (lines) can be represented by a graph $\mathcal{G}:=\{\mathcal{N},\mathcal{E}\}$, whose nodes
	$\mathcal{N}:=\{1,2,\ldots,N\}$ correspond to buses, and whose edges $\mathcal{E}:=\{(m,n)\}\subseteq\mathcal{N}\times\mathcal{N}$ correspond to transmission lines. For every bus $n\in\mathcal{N}$, let $V_n:=|V_n|e^{j\theta_n}$ be the nodal complex voltage, whose magnitude and phase are given by $|V_n|$ and $\theta_n$, respectively; likewise for the complex current injection $I_n:=|I_n|e^{j\phi_n}$. Let also $S_n:=P_n+jQ_n$ be the corresponding complex power injection, in which $P_n$ and $Q_n$ are the active and reactive power injection, respectively. For every line $(m,n)\in\mathcal{E}$, let $I_{mn}$ denote the complex current flowing from bus $m$ to $n$, and $S_{mn}^f:=P_{mn}^f+jQ_{mn}^f$ the complex power flow from bus $m$ to $n$ seen at the sending end, where $P_{mn}^f$ and $Q_{mn}^f$ are the active and reactive power flow, respectively; and likewise for the receiving-end (active and reactive) power flow $P_{mn}^t$ and $Q_{mn}^t$.

	The AC power flow model dictates that system variables $\{P_n\}$, $\{Q_n\}$, $\{P_{mn}^f\}$, $\{Q_{mn}^f\}$, $\{P_{mn}^t\}$, $\{Q_{mn}^t\}$, and $\{|V_n|^2\}$ are quadratic functions of the state vector $\bm{v}$. Clearly, this holds true for the squared voltage magnitude understood as $|V_n|^2=V_n \overline{V}_n$.
	To specify the relationship between power quantities and $\bm{v}$, introduce $\bm{Y}\in\mathbb{C}^{N\times N}$ to represent the bus admittance matrix, which is in general symmetric. 
	Ohm's law in conjunction with Kirchhoff's law
	reads as
	\begin{equation}\label{eq:ohm}
	\bm{i}=\bm{Y}\bm{v}.
	\end{equation}
	It is worth mentioning that $\bm{Y}$ is sparse, thus enabling efficient computations in large-size power networks, and its $(m,n)$-th entry is given by
	\begin{align}\label{eq:admittance}
	Y_{mn}:=\left\{\begin{array}{ll}
	-y_{mn},~&(m,n)\in\mathcal{E}\\
	{y}_{nn}^g+\sum_{k\in\mathcal{N}_n}y_{nk},~&m=n\\
	0,~&{\text{otherwise}}
	\end{array}\right.
	\end{align}
	where $y_{mn}$ denotes the admittance of line $(m,n)\in\mathcal{E}$, ${y}_{nn}^g$ the admittance to the ground at bus $n\in\mathcal{N}$, and $\mathcal{N}_n$ the set of neighboring buses directly connected to bus $n$. For $m\ne n$, let ${y}_{mn}^s$ be the shunt admittance at bus $m$ associated with line $(m,n)$. Recall from Ohm's and Kirchhoff's laws that the current flowing from bus $m$ to $n$ can be expressed as
	\begin{equation}
	\label{eq:imn}
	I_{mn}={y}_{mn}^s V_m+y_{mn}(V_m-V_n)
	\end{equation}
	whereby the reverse-direction current $I_{nm}$ can be given symmetrically. Due to ${y}^s_{mn} \ne 0$ in general, it holds $I_{mn}\neq -I_{nm}$.
	
	The AC model also asserts $P_n+jQ_n=V_n \overline{I}_n,\,\forall  n\in\mathcal{N}$. Appealing again to  \eqref{eq:ohm} leads to the next matrix-vector form
	\begin{equation}
	\label{eq:injection}
	\bm{p}+j\bm{q}={\rm{diag}}(\bm{v})\overline{\bm{i}}={\rm{diag}}(\bm{v})\overline{\bm{Y}}\overline{\bm{v}}
	\end{equation}
where both active and reactive power injections are quadratically related to $\bm{v}$.
	Likewise, the sending-end active and reactive power flow over line $(m,n)\in\mathcal{E}$ can be written as
	\begin{align}
	P_{mn}^f+jQ_{mn}^f&=V_m\overline{I}_{mn}\nonumber\\
	&=\left({y}_{mn}^s+\overline{y}_{mn}^s\right)V_m\overline{V}_m-\overline{y}_{mn}V_m\overline{V}_n\label{eq:flow}
	\end{align}
	where the second equality is obtained by substituting $I_{mn}$ in \eqref{eq:imn} into the first. Hence, $P_{mn}^f$ and $Q_{mn}^f$ can also be expressible as quadratic functions of $\bm{v}$. By symmetry, this quadratic relationship also holds for  $P_{mn}^t$ and $Q_{mn}^t$.

	To perform either power flow analysis or PSSE,
	a total of $L$ system variables are specified or measured by the system operator. The nonlinear AC networks have available the next seven types of quantities: $|V_n|^2$,  $P_n$, $Q_n$, $P_{mn}^f$, $Q_{mn}^f$,  $P_{mn}^t$, and $Q_{mn}^t$. If $\mathcal{N}_V$, $\mathcal{N}_P$, $\mathcal{N}_Q$, $\mathcal{E}_P^f$ ($\mathcal{E}_Q^f$), and $\mathcal{E}_P^t$ ($\mathcal{E}_Q^t$) denote the selected sets of buses/lines over which actual quantities of the corresponding type are available, the elaborated quadratic relationships prompt us to define the $L\times 1$ data vector
	$\bm{z}:=\big[\{|V_n|^2\}_{n\in\mathcal{N}_V},\;\{P_n\}_{n\in\mathcal{N}_P},\;\{Q_n\}_{n\in\mathcal{N}_Q},\; \{P_{mn}^f\}_{(m,\,n)\in\mathcal{E}_P^f},$ $\{Q_{mn}^f\}_{(m,n)\in\mathcal{E}_Q^f},\, \{P_{mn}^t\}_{(m,n)\in\mathcal{E}_P^t}, \, \{Q_{mn}^t\}_{(m,n)\in\mathcal{E}_Q^t}\big]^\ccalT\in\mathbb{R}^{L}$, whose entries can be succinctly given by
	\begin{equation}\label{eq:zell}
	z_\ell=\bm{v}^\ccalH\bm{H}_\ell\bm{v},\quad 1\le \ell \le L
	\end{equation}
	where $\{\bm{H}_\ell\}_{\ell=1}^L$ are some coefficient matrices to be specified.
	For this purpose, let $\left\{\bm{e}_n\in\mathbb{R}^{N}\right\}_{n=1}^N$ be the canonical basis of $\mathbb{R}^N$, and introduce also the admittance-dependent matrices
	\begin{align*}
	\bm{Y}_n&:=\bm{e}_n\bm{e}_n^\ccalT\bm{Y},\qquad \qquad\qquad\qquad\qquad\quad\;\,\forall n\in\mathcal{N},\\
	\bm{Y}_{mn}^f&:=(\overline{y}_{mn}+y_{mn})\bm{e}_m\bm{e}_m^\ccalT-y_{mn}\bm{e}_m\bm{e}_n^\ccalT,\quad\forall (m,n)\in\mathcal{E},\\
	\bm{Y}_{mn}^t&:=(\overline{y}_{nm}+y_{nm})\bm{e}_m\bm{e}_m^\ccalT-y_{nm}\bm{e}_m\bm{e}_n^\ccalT,\quad\forall (m,n)\in\mathcal{E}.
	\end{align*}
	For $|V_n|^2=V_n\overline{V}_n=\bm{v}^\ccalH \bm{e}_n\bm{e}_n^\ccalT\bm{v}$, it is clear that the corresponding $\bm{H}_n$ in \eqref{eq:zell} is
	\begin{equation}
	\label{eq:hv}
	\bm{H}_{V,n}:=\bm{e}_n\bm{e}_n^\ccalT\succeq\bm{0},\quad \forall n\in\mathcal{N}
	\end{equation}
	which are rank-$1$.
	By taking separately the real and imaginary parts of \eqref{eq:injection} and \eqref{eq:flow}, we obtain the $\{\bm{H}_\ell\}$ associated with the active and reactive power injections for all buses $n\in\mathcal{N}$
	\begin{align}\label{eq:hn}
	 &\bm{H}_{P,n}:=\frac{1}{2}\left(\bm{Y}_n+\bm{Y}_n^\ccalH\right),\quad\bm{H}_{Q,n}:=\frac{j}{2}\left(\bm{Y}_n-\bm{Y}_n^\ccalH\right)
	\end{align}
	and with sending-end and receiving-end active and reactive power flow at all lines $(m,n)\in\mathcal{E}$
	\begin{subequations}
		\label{eq:hn1}
		\begin{align}
		&\bm{H}_{P,mn}^f:=\frac{1}{2}\left(\bm{Y}_{mn}^f+\big(\bm{Y}_{mn}^f\big)^\ccalH\right)\\
		&\bm{H}_{Q,mn}^f:=\frac{j}{2}\left(\bm{Y}_{mn}^f-\big(\bm{Y}_{mn}^f\big)^\ccalH\right)\\
		&\bm{H}_{P,mn}^t:=\frac{1}{2}\left(\bm{Y}_{mn}^t+\big(\bm{Y}_{mn}^t\big)^\ccalH\right)\\
		&\bm{H}_{Q,mn}^t:=\frac{j}{2}\left(\bm{Y}_{mn}^t-\big(\bm{Y}_{mn}^t\big)^\ccalH\right).
		\end{align}
	\end{subequations}
	It is worth stressing that all $\{\bm{H}_\ell\}$ in \eqref{eq:hn} and \eqref{eq:hn1}
	are sparse, low-rank, and Hermitian, but they are non-definite in general. The power flow and PSSE problems are formulated in order next.
	
	\subsection{Power flow analysis}
	Power flow analysis deals with specified power quantities, which are enforced for optimally operating an electric power grid.
	Specifically, given $L$ perfectly known specifications $\{z_\ell\}_{\ell=1}^L$ and valid network parameters $\{\bm{H}_\ell\}_{\ell=1}^L$ as in \eqref{eq:zell}, the goal of power flow analysis is to decide the state vector $\bm{v}\in\mathbb{C}^N$ that satisfies all specifications, namely,
	\begin{subequations}\label{eq:pf}
		\begin{align}
		{\rm find}~&~~{\bm{v}}\in\mathbb{C}^N\label{eq:find}\\
		{\rm subject~to}~&~~ \bm{v}^\ccalH\bm{H}_{\ell}\bm{v}=z_\ell,\quad 1\le \ell \le L\label{eq:noiseless}.
		\end{align}
	\end{subequations}
	
	Recall that each bus in a power system is classified as a PQ, PV, or slack (reference) bus based on the constraints imposed per bus. PQ buses, which often correspond to loads, specify and enforce only active and reactive power injection $P_n$ and $Q_n$ on bus $ n $. On the other hand, the PV buses, which are typically associated with generators, enforce active power injection $P_n$ and voltage magnitude $|V_n|$. For the slack bus, its voltage phase is fixed at $\theta_n=0$, by convention. With $\theta_n=0$, the power flow problem in \eqref{eq:pf} is equivalent to solving for $2N-1$ real-valued unknowns from $L$ quadratic equations.
	The classical power flow problem
	considers the particular case where the $L=2N-1$  specifications are enforced only at the PV, PQ, and slack buses as opposed to a combination of buses and lines~\cite{psse2016madani}.

	\subsection{Power system state estimation}
	PSSE on the other hand deals with noisy observations acquired by the SCADA system adhering to 
	\begin{equation}\label{eq:noisy}
	z_\ell=\bm{v}^\ccalH\bm{H}_\ell\bm{v}+\eta_\ell,\quad 1\le\ell \le L
	\end{equation}
	where $\eta_\ell$ accounts for the zero-mean 
	distributed measurement error with known variance $\sigma_\ell^2$, henceforth assumed independent across meters.
	The goal of PSSE is, given SCADA measurements $\{z_\ell\in\mathbb{R}\}_{\ell=1}^L$ and also parameters $\{\bm{H}_\ell\}_{\ell=1}^L$, estimate the state vector $\bm{v}\in\mathbb{C}^N$.

	Adopting the WLS criterion,
	the SE task can be cast as that of solving the following nonlinear LS problem
	\begin{equation}
	\hat{\bm{v}}:=\arg\min_{\bm{v}\in\mathbb{C}^N}\sum_{\ell=1}^L w_\ell\!\left(z_\ell-\bm{v}^\ccalH\bm{H}_\ell\bm{v}\right)^2\label{eq:psse}
	\end{equation}
	where entries of the weight vector $\bm{w}:=[w_1~\cdots~w_L]^\ccalT$ are often taken as $w_\ell:=1/\sigma_\ell^2$ for known $\sigma_\ell^2$ values.
	The WLS estimate $\hat{\bm{v}}$ coincides with the maximum likelihood one when the error vector $\boldsymbol{\eta}:=[\eta_1~\cdots~\eta_L]^\ccalT$ obeys the multivariate Gaussian distribution
	$\mathcal{N}(\bm{0},{\rm diag}(\boldsymbol{\sigma}^2))$ with $\bm{\sigma}^2:=[\sigma_1^2~\cdots~\sigma_L^2]^\ccalT$.
	Unfortunately, due to the quadratic terms $\{\bm{v}^\ccalH\bm{H}_\ell\bm{v}\}$ inside the squares, the WLS SE problem is nonconvex. Minimizing nonconvex objectives, which typically exhibit many stationary points, is \emph{NP-hard} in general \cite{hardproblems}. 
	Hence, solving the problem in \eqref{eq:psse} is indeed challenging.


	PSSE approaches so far can be grouped as convex and nonconvex ones. The latter includes the ``workhorse''  Gauss-Newton method, which is also typically employed in practice: Upon linearizing the error function in the LS cost around a given estimate, the minimizer of the norm of the resulting linearized approximation is used to initialize the next iteration~\cite[Sec.\,1.5]{Be99}. Minimizing nonconvex functions, Gauss-Newton iterations can be problematic due to: i) its sensitivity to the initial point; and, ii) lack of convergence guarantee to even a stationary point~\cite{Be99}.
	Convex approaches via SDR~\cite{jstsp2014zhu,psse2016zhang}
	express all data  $\{z_\ell\}$ as linear functions of the outer-product $\bm{V}:=\bm{v}\bm{v}^\ccalH\in\mathbb{C}^{N\times N}$.
	Problem \eqref{eq:psse} is then convexified by dropping the nonconvex constraint ${\rm rank}(\bm{V})=1$.
	SDR-based methods seldom yield solutions of rank-$1$ in the noisy case.
	Further eigen-decomposition or randomization procedures are required to recover the estimator $\hat{\bm{v}}$ from the SDR solution $\hat{\bm{V}}$.
	Performance of SDR solutions degrades when the data size is small, or when the set of measurements does not include the voltage magnitude at all buses,
	as will be demonstrated by our numerical results in Sec. \ref{sec:test}.

	\section{Feasible Point Pursuit based Solvers}\label{sec:fpp}
	
	In this section, the FPP-based power flow and PSSE solvers will be developed based on procedures distinct from existing iterative optimization and SDR-based SE approaches. To this end, some basics of FPP are first reviewed.
	For nonconvex QCQPs,
	FPP iteratively solves a series of convexified QCQPs obtained with successive convex inner-restrictions of the original nonconvex feasibility set, and with additive slacks
	to approximate the feasible solutions of the original nonconvex QCQP~\cite{spl2015sidiropoulos}. Specifically, starting with an initial guess, FPP first decomposes the quadratic terms in all nonconvex constraints into their convex and nonconvex parts by means of eigen-decomposition, which can be efficiently carried out offline; then it linearizes the nonconvex parts around the current iterate to obtain a restricted convex QCQP. Due to restriction of the feasibility set, the convexified QCQP may be infeasible. To sustain feasibility,
	a slack variable is introduced for each relaxed constraint,
	with a convex penalty on the slack variables added to the cost function, which can
	enforce sparing use of slacks to produce solutions of minimal constraint violation. The minimizer of the regularized convex QCQP subproblem is taken as the next iterate, which will be used as the linearization point of the nonconvex components at the next iteration. This successive convex approximation and feasibility-restoring procedure is repeated until a certain stopping criterion is met. Further details of FPP can be found in \cite{spl2015sidiropoulos}, \cite{tsg2016zamzam}.
	
	Note that the power flow problem  \eqref{eq:pf} consists of quadratic equality constraints, which are not in the standard QCQP form. To apply FPP,  equalities are relaxed to inequalities, while penalizing the slack variables $\bm{s}:=\{s_\ell\ge 0\}_{\ell=1}^L$, yielding
	\begin{subequations}		\vspace{-1.2em}
		\label{eq:pfqcqp}
		\begin{align}
		\underset{\bm{v}\in\mathbb{C}^N,\,
			\{s_\ell\}_{l=1}^{L}}{\minimize}
		~&~~f(\bm{s})={\sum_{l=1}^{L}  s_\ell^2}
		\\
		{\rm subject~to}	~&~~	\left|z_\ell - \bm{v}^\ccalH \bm{H}_{\ell} \bm{v}\right | \leq s_\ell, \quad  1\le \ell\le L\label{eq:pfqcqp2}
		\end{align}
	\end{subequations}
	where other choices of the convex penalty function $f(\cdot)$ include the (weighted) $\ell_1$ or $\ell_\infty$ norm. 
Problem~\eqref{eq:pfqcqp} is equivalent to the original power flow formulation \eqref{eq:pf} when the latter is feasible. To see this, assume that the set of power flow equations in \eqref{eq:noiseless} admits (possibly more than one) feasible solutions. Clearly at the optimum of \eqref{eq:pfqcqp}, the objective reduces to zero, the slack variables $\{s_\ell\}_{\ell=1}^L$ take zero values, and all equalities in \eqref{eq:pfqcqp2} are achieved, thus yielding a feasible solution to the set of power flow equations in \eqref{eq:pf}.

	Similarly, our PSSE formulation in \eqref{eq:psse} minimizes a quartic polynomial of $\bm{v}$.
	To use FPP, problem \eqref{eq:psse} is reformulated as
	\begin{subequations}	
				\vspace{-1.2em}
		\label{eq:psseqcqp}
		\begin{align}
		\underset{\bm{v}\in\mathbb{C}^N,\;
			\{s_\ell\}_{l=1}^{L}}{\minimize}
		~&~~f(\bm{s})= {\sum_{l=1}^{L} w_\ell s_\ell^2}\label{eq:psseqcqp1}\\
		{\rm subject~to}	~&~~	\left|z_\ell - \bm{v}^\ccalH \bm{H}_{\ell} \bm{v}\right | \leq s_\ell, \quad  1\le \ell\le L\label{eq:psseqcqp2}
		\end{align}
	\end{subequations}
	where the slack variables $ \bm{s}:=\{s_\ell\ge 0\}_{\ell=1}^L $ in this case relate to the deviations between noisy measurements $\{z_\ell\}_{\ell=1}^L$ and the actual quantities $\{\bm{v}^\ccalH
	\bm{H}_\ell\bm{v}\}_{\ell=1}^L$. Problem \eqref{eq:psseqcqp} can be similarly shown equivalent to \eqref{eq:psse}. Other convex penalty functions $f(\cdot)$ in \eqref{eq:psseqcqp1} can also be selected. In particular, if the error vector follows
	the multivariate Laplace distribution, i.e., $\bm{\eta} \sim {\rm Laplace}(\bm{0},\,\bm{b})$ with $\bm{b}:=[b_1~\cdots~b_L]^\ccalT$ collecting all scaling parameters, minimizing the $\ell_1$-based function $f(\bm{s})=\sum_{\ell=1}^{L}w_\ell s_\ell$ with $w_\ell =1/b_\ell$
	in \eqref{eq:psseqcqp} produces the maximum likelihood estimate~\cite{AburExpositoBook,psse2016zhang}.

	Apparently, the reformulated power flow and PSSE problems are of the same form [cf. \eqref{eq:pfqcqp} and  \eqref{eq:psseqcqp}], except for a minor difference in the cost functions. Setting unit weights $w_\ell=1$ in \eqref{eq:psseqcqp} reduces problem \eqref{eq:psseqcqp} to  \eqref{eq:pfqcqp}.
	Without loss of generality, we will hereafter focus on the PSSE formulation \eqref{eq:psseqcqp}, and develop the novel FPP solver. The power flow problem can be readily handled with all weights being $w_\ell =1$.
	
	In this direction, let us first convert problem \eqref{eq:psseqcqp} into a standard QCQP. Note that
	constraints \eqref{eq:psseqcqp2} can be replaced by two sets of inequalities to arrive at
	\begin{subequations}
		\label{eq:qcqpstd}
		\begin{align}
		\underset{\bm{v}\in\mathbb{C}^N,\,
			\bm{s}\in\mathbb{R}^L}{\minimize}~&~~{\sum_{l=1}^{L} w_\ell s_\ell^2}\\
		{\rm subject~to}	~&~~\bm{v}^\ccalH \bm{H}_{\ell} \bm{v} \leq  z_\ell   + s_\ell ,\quad\quad\quad\;\;  1\le \ell\le L\label{eq:qcqpstd2}\\
		~&~~\bm{v}^\ccalH \left(-\bm{H}_{\ell}\right) \bm{v} \leq  -z_\ell   + s_\ell ,\quad 1\le \ell\le L \label{eq:qcqpstd3}.
		\end{align}
	\end{subequations}
	Problem \eqref{eq:qcqpstd} is
	nonconvex even for (semi)definite coefficient matrices $\{\bm{H}_\ell\}_{\ell=1}^L$. Next we demonstrate how to take advantage of FFP to solve the problem at hand in detail.

	As discussed in Sec. \ref{sec:problem}, there are two types of
	$\{\bm{H}_\ell\}$ matrices, one corresponding to the squared voltage magnitude, and the other to power quantities. Type-I $\{\bm{H}_\ell\}$ are positive semidefinite [cf. \eqref{eq:hv}], while Type-II are non-definite [cf. \eqref{eq:hn} and \eqref{eq:hn1}]. For ease of exposition, let us introduce the FPP constraint convexification procedure 
	using one nonconvex quadratic constraint in \eqref{eq:qcqpstd}. 
	Along the lines of FPP,
	consider the term $\bm{v}^\ccalH \bm{H}_{\ell} \bm{v}$ in \eqref{eq:qcqpstd2} for some $\bm{H}_\ell $ in \eqref{eq:hn}, which can be decomposed into its convex and nonconvex components as 
	\begin{equation}\label{eq:sources}
	\bm{v}^\ccalH \bm{H}_{\ell}^{(+)} \bm{v}+ \bm{v}^\ccalH \bm{H}_{\ell}^{(-)} \bm{v} \leq z_\ell + s_\ell 
	\end{equation}
	where $ \bm{H}_{\ell}^{(+)}$ and $\bm{H}_{\ell}^{(-)} $ represent the positive semidefinite (convex) and negative semidefinite (nonconvex) parts of $ \bm{H}_{\ell} $ in \eqref{eq:sources}, respectively. For the nonconvex source $ \bm{v}^\ccalH \bm{H}_{\ell}^{(-)} \bm{v}  $  in \eqref{eq:sources}, an inner linear restriction will be derived next.
	
	The following inequality holds for any $\bm{y}\in\mathbb{C}^N$ due to the negative semidefiniteness of $\bm{H}_\ell^{(-)}$
	\begin{equation}
	(\bm{v} - \bm{y})^\ccalH \bm{H}_{\ell}^{(-)} (\bm{v}-\bm{y}) \leq 0.
	\end{equation}
	Upon expanding the left-hand-side and rearranging terms, one arrives at
	\begin{equation*}
	\bm{v}^\ccalH \bm{H}_{\ell}^{(-)} \bm{v} \leq 2\Re\!\left \{\bm{y}^\ccalH \bm{H}_{\ell}^{(-)} \bm{v}\right\} - \bm{y}^\ccalH \bm{H}_{\ell}^{(-)} \bm{y}.
	\end{equation*}
	Key to the FPP algorithm is  replacing the nonconvexity stemming from $\bm{H}_\ell^{(-)}$ in \eqref{eq:sources} or \eqref{eq:qcqpstd2} by its inner linear approximation at some given point $\bm{y}$ to yield
	\begin{equation}
	\bm{v}^\ccalH \bm{H}_{\ell}^{(+)} \bm{v} + 2\Re\!\left\{ \bm{y}^\ccalH \bm{H}_{\ell}^{(-)} \bm{v}\right\} \leq z_\ell + \bm{y}^\ccalH \bm{H}_{\ell}^{(-)} \bm{y} + s_\ell.\label{FPP-Const1}
	\end{equation}
	The strategy in selecting the linearization point $\bm{y}$ will be discussed shortly.
	In the same fashion, the nonconvex quadratic constraints in \eqref{eq:qcqpstd3} can be replaced by
	\begin{equation}
	\bm{v}^\ccalH \big(\!-\!\bm{H}_{\ell}^{(-)}\big) \bm{v} \!- 2 \Re\!\left \{ \!\bm{y}^\ccalH \bm{H}_{\ell}^{(+)} \bm{v}\!\right\}\! \leq\! -z_\ell - \bm{y}^\ccalH \bm{H}_{\ell}^{(+)} \bm{y} + s_\ell.\label{FPP-Const2}
	\end{equation}
	
	Heed that the flexibility introduced by the slacks $\{s_\ell\}_{\ell=1}^L$ always restores the feasibility of the relaxed constraints, which contributes to improved performance of FPP over other convexification approaches~\cite{spl2015sidiropoulos}.
	In the presence of noise, the minimum values required for  $ \{s_\ell\ge 0\}_{\ell}^L$ to satisfy \eqref{FPP-Const1} and \eqref{FPP-Const2} depend on the measurement error contained in $\{z_\ell\}_{\ell=1}^L$.
	
	The FPP method replaces all nonconvex constraints in \eqref{eq:qcqpstd2} by their convex restriction \eqref{FPP-Const1}, and those in \eqref{eq:qcqpstd3} by \eqref{FPP-Const2} to derive a convexified QCQP regularized with slack variables to ensure feasibility. Minimizing some convex penalty function of the slacks $\{s_\ell\}_{\ell=1}^L$ not only minimizes the fitting error between
	$\{z_\ell\}$ and $\{\bm{v}^\ccalH\bm{H}_\ell\bm{v}\}$, but also
	enforces sparing use of slacks and promotes
	solutions of minimal constraint violation.
	
	In a nutshell, the developed FPP-based PSSE solver can be understood as follows. Starting with an initial point $\bm{v}_0$ (typically the flat voltage profile point, i.e., all-ones vector), our FPP-based solver successively tackles a sequence of convexified QCQPs with the linearization point being the current iterate $\bm{v}_k$, which is the $\bm{v}$-minimizer obtained by solving a convexified QCQP at the previous iteration. Hence, assuming available the $\bm{v}$-minimizer $\bm{v}_k$ at the $(k+1)$-st iteration,
	our FPP-based solver boils down to solving the following convexified QCQP subproblem
	\begin{subequations}
		\label{eq:fpp}
		\begin{align}
		&\left\{\bm{v}_{k+1},\bm{s}_{k+1}\right\}:=	
		\arg\min_{\bm{v},\,
			\bm{s}}
		\sum_{l=1}^{L} w_\ell s_\ell^2    \label{eq:fpp1}\\
		&{\rm subject~to}\nonumber\\
		&			\bm{v}^\ccalH \bm{H}_\ell^{(+)} \bm{v} \!+ 2 \Re\!\left\{\bm{y}^\ccalH \bm{H}_\ell^{(-)} \bm{v}\right\}  \leq z_\ell\! + \bm{y}^\ccalH \bm{H}_\ell^{(-)} \bm{y}\! + s_\ell\label{eq:fpp2}\\
		& \bm{v}^\ccalH \bm{H}_\ell^{(-)} \bm{v}\! + 2 \Re\!\left\{\bm{y}^\ccalH \bm{H}_\ell^{(+)} \bm{v}\right\}  \geq z_\ell \!+ \bm{y}^\ccalH \bm{H}_\ell^{(+)} \bm{y}\! -s_\ell\label{eq:fpp3}\\
		& \forall \ell =1,2,\ldots,L\nonumber
		\end{align}	
	\end{subequations}
	where $\bm{y}:=\bm{v}_k$ is the $\bm{v}$-minimizer of \eqref{eq:fpp} at the $k$-th iteration. The QCQP in \eqref{eq:fpp} is convex, which can be solved in polynomial time using off-the-shelf solvers~\cite{sedumi}

	The FPP-based PSSE solver is summarized in Alg. \ref{alg:fpp}. 	
The following three properties of our FPP-based solver are worth highlighting.
	
	\begin{remark}[\it Power flow analysis]
		\label{rmk:1}
		Cast as a special instance of PSSE, the power flow problem in \eqref{eq:pf} can be solved by our developed FPP-based PSSE solver with unit weights $w_\ell=1$.
	\end{remark}
	
	\begin{remark}[\it Bad data removal]
		\label{rmk:2}
		Besides the $\ell_2$-norm in \eqref{eq:fpp1}, other convex penalty functions can be used to fit different (noisy) data models. In particular, adopting the weighted $\ell_1$-norm (i.e., replacing $s_\ell^2$ with $|s_\ell|$) yields the weighted least-absolute-value estimator known for bad data cleansing
		\cite{CelikAbur92}. 
	\end{remark}
	
	\begin{remark}[\it Synchrophasors]
		\label{rmk:3}
		Synchrophasors, if available, can be easily incorporated into the developed PSSE formulation \eqref{eq:fpp}. To see this, letting $\bm{\zeta}_n=\bm{\Phi}_n\bm{v}+\bm{\epsilon}_n$ collect the noisy PMU data at bus $n$, hybrid estimation exploiting both nonlinear SCADA measurements and linear PMU ones can be achieved \cite{2011pmu} with an additional data-fitting term for the PMU data in \eqref{eq:fpp1}, namely, $\sum_{n\in\mathcal{P}}\|\bm{\zeta}_n -\bm{\Phi}_n \bm{v}\|_2^2$, where $\mathcal{P}$ denotes the subset of the PMU-instrumented buses.
	\end{remark}

		\SetKw{Init}{Initialization: }
		\DontPrintSemicolon
		\begin{algorithm}[t]
			\renewcommand{\arraystretch}{1}
			\caption{FPP-based power flow and PSSE Solvers}
			\vskip 0.05in
			\KwInput{Data $\{(z_\ell,\bm{H}_\ell)\}$; weights $\{w_\ell=1\}$ for power flow, and $\{w_\ell=1/\sigma_\ell^2\}$ for PSSE; solution accuracy $\epsilon>0$.}
			\Init{\rm set $ k = 0 $ and $ \bm{y}=[1~\cdots~1]^\ccalT$}.\;
			\Repeat{$\left\|\bm{v}_{k} - \bm{v}_{k-1} \right\|_2  \leq \epsilon $.}{
				$ \{\bm{v}_k, \bm{s}_k\} \leftarrow$ minimizer of problem \eqref{eq:fpp} \;
				$ \bm{y} \leftarrow \bm{v}_k $\;
				$ k \leftarrow k+1 $\;
			}
			\KwOut{$ \hat{\bm{v}} \leftarrow \bm{v}_k $.}
			\label{alg:fpp}
			\vskip -0.01in
					\end{algorithm}

On the theoretical side, the following result establishes convergence of our developed FPP-based solvers to a stationary point of the WLS formulation.
	
	\begin{proposition}[Global convergence of FPP-based solvers]
		Let $ \{ \bm{v}_k \}_{k=0}^{\infty} $ be any sequence generated by the FPP-based solver  in Alg.~\ref{alg:fpp}. Then, all limit points of $ \{ \bm{v}_k \}_{k=0}^{\infty} $ are stationary points of the WLS problem in \eqref{eq:psse}.
	\end{proposition}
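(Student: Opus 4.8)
The plan is to recognize Algorithm~\ref{alg:fpp} as a successive convex inner-approximation scheme and to run the convergence argument associated with FPP~\cite{spl2015sidiropoulos}. Two structural properties of the surrogate constraints \eqref{eq:fpp2}--\eqref{eq:fpp3} drive everything. First, \emph{inner restriction}: since $\bm{H}_\ell^{(-)}\preceq\bm{0}$, the map $\bm{v}\mapsto\bm{v}^\ccalH\bm{H}_\ell^{(-)}\bm{v}$ is concave, so its first-order expansion about $\bm{y}$ overestimates it; feeding this upper bound into \eqref{eq:qcqpstd2} shows every $(\bm{v},\bm{s})$ feasible for \eqref{eq:fpp2} is feasible for \eqref{eq:qcqpstd2}, and symmetrically for \eqref{eq:fpp3}/\eqref{eq:qcqpstd3}. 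Second, \emph{consistency}: at $\bm{v}=\bm{y}$ the expansion is exact, so the surrogate constraints agree with the originals in both value and Wirtinger gradient at the linearization point. I would state these as a short lemma; their verification is the gradient-inequality computation already sketched around \eqref{FPP-Const1}, and a one-line Wirtinger derivative check giving surrogate gradient $\bm{H}_\ell\bm{y}$, matching the true constraint gradient $\bm{H}_\ell\bm{v}|_{\bm{v}=\bm{y}}$.

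Next I would establish monotone descent of the objective. Setting $\bm{y}=\bm{v}_k$ at iteration $k+1$ and substituting $\bm{v}=\bm{v}_k$ into \eqref{eq:fpp2}--\eqref{eq:fpp3}, consistency collapses the surrogate constraints back to the original constraints of \eqref{eq:qcqpstd} at $\bm{v}_k$; these hold for the previous iterate with slacks $\bm{s}_k$ by inner restriction applied at step $k$. Hence $(\bm{v}_k,\bm{s}_k)$ is feasible for the $(k+1)$-st subproblem, and optimality yields $f(\bm{s}_{k+1})\le f(\bm{s}_k)$. Since $f(\bm{s})=\sum_\ell w_\ell s_\ell^2\ge 0$, the nonincreasing sequence $\{f(\bm{s}_k)\}$ is bounded below and converges.

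For stationarity, I would pass to the limit in the KKT conditions of the convex subproblem \eqref{eq:fpp}. Fix a convergent subsequence $\bm{v}_{k_j}\to\bm{v}^\ast$. For each $j$ the pair $(\bm{v}_{k_j+1},\bm{s}_{k_j+1})$ satisfies the (by convexity, necessary and sufficient) KKT system of \eqref{eq:fpp} with expansion point $\bm{v}_{k_j}$. Using gradient consistency together with $\bm{v}_{k_j+1}-\bm{v}_{k_j}\to\bm{0}$, the limiting KKT system coincides with that of the standard QCQP \eqref{eq:qcqpstd} at $\bm{v}^\ast$. Finally, at any optimizer of \eqref{eq:qcqpstd} the slacks tighten to $s_\ell=|z_\ell-\bm{v}^\ccalH\bm{H}_\ell\bm{v}|$, making $\sum_\ell w_\ell s_\ell^2$ equal the WLS cost in \eqref{eq:psse}; consequently a KKT point of \eqref{eq:qcqpstd} is a point where the Wirtinger gradient of the WLS objective vanishes, i.e., a stationary point of \eqref{eq:psse}.

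The main obstacle is the step $\bm{v}_{k_j+1}-\bm{v}_{k_j}\to\bm{0}$ needed to merge the expansion point with the subproblem solution in the limit. This is delicate because the subproblem objective $\sum_\ell w_\ell s_\ell^2$ depends on $\bm{v}$ only through the constraints, so the $\bm{v}$-minimizer need not be unique and the usual strong-convexity/sufficient-decrease argument does not apply verbatim. I would handle it either by adding a vanishing proximal term $\mu_k\|\bm{v}-\bm{v}_k\|_2^2$ to \eqref{eq:fpp1}, as is standard in successive convex approximation, or by extracting a doubly-indexed subsequence along which both $\bm{v}_{k_j}$ and $\bm{v}_{k_j+1}$ converge and showing, from convergence of the objective values, that the two limits satisfy the same limiting stationarity conditions. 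A secondary technical point is boundedness of the KKT multipliers when taking limits, which I would secure through a constraint qualification (e.g.\ Mangasarian--Fromovitz), together with boundedness/coercivity of the iterates to guarantee limit points exist.
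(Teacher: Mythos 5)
Your route is genuinely different from the paper's. The paper's proof is a short reduction: it rewrites the complex QCQP \eqref{eq:qcqpstd} as an equivalent real QCQP in $\bm{u}:=[\Re(\bm{v})^\ccalT~\Im(\bm{v})^\ccalT]^\ccalT\in\mathbb{R}^{2N}$, notes that each constraint then becomes a difference of two convex quadratics via the PSD/NSD split of the realified matrices $\overline{\bm{H}}_\ell$, recognizes Alg.~\ref{alg:fpp} as the convex-concave procedure applied to that DC program, and concludes by citing \cite[Thm.~10]{Lanckriet-2009}. You instead rebuild, from first principles and directly in $\mathbb{C}^N$ with Wirtinger calculus, exactly the machinery that makes such a citation valid: inner restriction, value/gradient consistency at the linearization point, monotone descent, and a limiting argument. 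Your structural lemma, your descent step, and your tight-slack correspondence between KKT points of \eqref{eq:qcqpstd} and stationary points of \eqref{eq:psse} are all correct; what the paper's route buys is brevity, by outsourcing precisely the limiting step that you (rightly) identify as the delicate one.

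That step is where your proposal needs repair, because your two fixes are not equally viable. Fix (a) --- the vanishing proximal term --- changes the subproblem, so it would prove convergence of a modified algorithm, not of Alg.~\ref{alg:fpp}, and cannot establish the proposition as stated. Fix (b) is the right idea, but the target should not be ``the two limits satisfy the same limiting stationarity conditions''; the clean completion needs neither $\bm{v}_{k_j+1}-\bm{v}_{k_j}\to\bm{0}$ nor any limits of multipliers. Let $\Phi(\bm{v};\bm{y})$ denote the slack-minimal cost of the subproblem \eqref{eq:fpp} linearized at $\bm{y}$, evaluated at $\bm{v}$; it is explicit and jointly continuous, since each minimal slack is the maximum of two continuous expressions in $(\bm{v},\bm{y})$. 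Your consistency lemma gives $\Phi(\bm{y};\bm{y})=F(\bm{y})$, the WLS cost, and inner restriction gives $F(\bm{v})\le\Phi(\bm{v};\bm{y})$; your descent argument then yields $F(\bm{v}_k)\downarrow F^\ast$. Now for every fixed $\bm{v}$ and every $j$,
\begin{equation*}
\Phi(\bm{v};\bm{v}_{k_j}) \;\ge\; \Phi(\bm{v}_{k_j+1};\bm{v}_{k_j}) \;\ge\; F(\bm{v}_{k_j+1}) \;\ge\; F^\ast ,
\end{equation*}
and letting $j\to\infty$ along $\bm{v}_{k_j}\to\bm{v}^\ast$ gives $\Phi(\bm{v};\bm{v}^\ast)\ge F^\ast=\Phi(\bm{v}^\ast;\bm{v}^\ast)$. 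Hence $\bm{v}^\ast$ itself minimizes the convex subproblem linearized at $\bm{v}^\ast$. Slater's condition holds trivially for that subproblem (inflate the slacks), so its KKT conditions hold at $\bm{v}^\ast$, and by your gradient-consistency check they coincide with the stationarity conditions of \eqref{eq:qcqpstd} at $\bm{v}^\ast$, hence, via your tight-slack argument, with stationarity of \eqref{eq:psse}. This also dissolves your two secondary worries: multiplier boundedness is moot because KKT is invoked only once, at the limit problem; and coercivity/boundedness of the iterates is unnecessary, since the proposition asserts a property of limit points and is vacuous when none exist.
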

	
	\begin{proof}
		As elaborated in Sec. \ref{sec:fpp},
		solving problem \eqref{eq:qcqpstd} is equivalent to  solving problem \eqref{eq:psse}. The nonconvex QCQP of complex-valued vector $\bm{v}\in\mathbb{C}^N$ in \eqref{eq:qcqpstd} can be equivalently posed as a QCQP of the expanded real-valued vector $ {\bm u} := [\Re(\bm{v})^T~ \Im(\bm{v})^T]^T\in\mathbb{R}^{2N}  $, where the associated quadratic matrices $\{ \overline{{\bm H}}_\ell\} $ are given as
		\begin{equation*}
		\overline{\bm{H}}_\ell := \begin{bmatrix}
		\Re(\bm{H}_\ell) & -\Im(\bm{H}_\ell)\\
		\Im(\bm{H}_\ell) & \Re(\bm{H}_\ell) 
		\end{bmatrix}\in\mathbb{R}^{2N\times 2N},\quad  1\leq \ell\leq L.
		\end{equation*}
		Accordingly, each constraint in \eqref{eq:qcqpstd}
		can be re-expressed as the difference between two convex functions. To see this, consider e.g. constraint~\eqref{eq:qcqpstd2}, which can be rewritten as
		\begin{equation}
		\big(\bm{u}^\ccalT \overline{\bm{H}}_{\ell}^{(+)} \bm{u} - s_\ell\big) - \big(\bm{u}^\ccalT (-\overline{\bm{H}}_{\ell}^{(-)}) \bm{u}\big)\leq  z_\ell
		\end{equation}
		where $ \overline{\bm{H}}_{\ell}^{(+)} $ and $ \overline{\bm{H}}_{\ell}^{(-)} $ are the positive and negative semidefinite parts of $ \overline{\bm{H}}_{\ell} $, hence rendering
		terms $\bm{u}^\ccalT \overline{\bm{H}}_{\ell}^{(+)} \bm{u} - s_\ell$ and $\bm{u}^\ccalT (-\overline{\bm{H}}_{\ell}^{(-)}) \bm{u}$ both convex.
		Alg. \ref{alg:fpp} is tantamount to an application of the convex-concave procedure~\cite{Yuille-2003,2017park} to the reformulated QCQP in the real domain.
		Hence, the sequence generated by Alg. \ref{alg:fpp} converges to a stationary point of \eqref{eq:psse} by appealing to the results in \cite[Thm. 10]{Lanckriet-2009}.
	\end{proof}

	\section{Cram{\' e}r-Rao Bound for PSSE}
	According to standard results from estimation theory \cite{kaybook}, the variance of any unbiased estimator is lower bounded by the Cram{\' e}r-Rao lower bound (CRLB). Appreciating its key role as a performance benchmark across different estimators,
this section establishes the
CRLB for the fundamental PSSE problem. The CRLB analysis of PSSE however, entails finding derivatives (gradient and Hessian) of a real-valued function with respect to multiple complex-valued variables. To address this challenge, we call for advanced complex analysis tools based on the so-termed Wirtinger derivative and Wirtinger's calculus, which are detailed in the Appendix. 	
	The following result provides a closed-form CRLB for any unbiased PSSE solver under the AWGN model in \eqref{eq:noisy}, which can be directly used to assess the performance of other PSSE solvers. 
	
	\begin{proposition}\label{prop:crlb}
		Consider estimating the unknown state vector $\bm{v}\in\mathbb{C}^N$ from noisy data $\{z_\ell\}_{\ell=1}^L$ obeying the model in \eqref{eq:noisy}, where the noise $\eta_\ell$ is assumed Gaussian distributed with mean zero and variance $\sigma_\ell^2$, and is also independent across meters. Then the covariance matrix of any unbiased estimator $\hat{\bm{v}}$ obeys
		\begin{equation}\label{eq:crlb}
		{\rm Cov}(\hat{\bm{v}})\succeq [\bm{F}^\dagger(\bm{v},\overline{\bm{v}})]_{1:N,1:N}
		\end{equation}
		where the Fisher information matrix is given by
		\begin{equation}\label{eq:fim}
		\bm{F}(\bm{v},\overline{\bm{v}})\!=\!\!\left[\!\!\begin{array}{cc}
		\sum_{\ell=1}^L \!\tfrac{1}{\sigma_\ell^2}(\bm{H}_\ell\bm{v})(\bm{H}_\ell\bm{v})^\ccalH
		\!&\!\!
		\sum_{\ell=1}^L\!\tfrac{1}{\sigma_\ell^2}(\bm{H}_\ell\bm{v})(\overline{\bm{H}}_\ell\overline{\bm{v}})^\ccalH\\
		\sum_{\ell=1}^L\!\tfrac{1}{\sigma_\ell^2}(\overline{\bm{H}}_\ell\overline{\bm{v}})(\bm{H}_\ell\bm{v})^\ccalH
		\!&\!\!
		 \sum_{\ell=1}^L\!\tfrac{1}{\sigma_\ell^2}(\overline{\bm{H}}_\ell\overline{\bm{v}})(\overline{\bm{H}}_\ell\overline{\bm{v}})^\ccalH
		\end{array}\!\!
		\right].
		\end{equation}
		Furthermore, $\bm{F}$ has at least rank-$1$ deficiency even when all possible SCADA measurements are available.
	\end{proposition}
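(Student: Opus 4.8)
The plan is to place the problem in the augmented-parameter (conjugate-coordinate) framework of Wirtinger calculus, treating $\bm{v}$ and $\overline{\bm{v}}$ as formally independent coordinates collected in $\underline{\bm{\theta}}:=[\bm{v}^\ccalT~\overline{\bm{v}}^\ccalT]^\ccalT$, and then to build the Fisher information matrix (FIM) as the covariance of the Wirtinger score. First I would write the AWGN log-likelihood
\[
\ln p(\bm{z};\bm{v}) = c - \sum_{\ell=1}^L \tfrac{1}{2\sigma_\ell^2}\bigl(z_\ell - \bm{v}^\ccalH\bm{H}_\ell\bm{v}\bigr)^2 ,
\]
and, noting that each $g_\ell(\bm{v}):=\bm{v}^\ccalH\bm{H}_\ell\bm{v}$ is real-valued with Hermitian $\bm{H}_\ell$, apply the Wirtinger rules from the Appendix to get $\partial g_\ell/\partial\overline{\bm{v}} = \bm{H}_\ell\bm{v}$ and $\partial g_\ell/\partial\bm{v} = \overline{\bm{H}}_\ell\,\overline{\bm{v}}$. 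These two blocks form the conjugate-symmetric vector $\bm{a}_\ell:=\bigl[(\bm{H}_\ell\bm{v})^\ccalT~(\overline{\bm{H}}_\ell\,\overline{\bm{v}})^\ccalT\bigr]^\ccalT$, which will be the building block of the FIM.

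The second step is to evaluate the augmented score. Since $z_\ell - g_\ell = \eta_\ell$ under the model, the chain rule yields $\bm{\rho} := [\,(\partial\ln p/\partial\overline{\bm{v}})^\ccalT~(\partial\ln p/\partial\bm{v})^\ccalT\,]^\ccalT = \sum_\ell (\eta_\ell/\sigma_\ell^2)\,\bm{a}_\ell$. Because the $\{\eta_\ell\}$ are independent, zero-mean, with $\mathbb{E}[\eta_\ell\eta_{\ell'}] = \sigma_\ell^2\delta_{\ell\ell'}$, the cross terms vanish and the FIM $\bm{F}=\mathbb{E}[\bm{\rho}\bm{\rho}^\ccalH]$ collapses to the single sum $\sum_\ell \sigma_\ell^{-2}\,\bm{a}_\ell\bm{a}_\ell^\ccalH$, which is exactly the block matrix in \eqref{eq:fim}. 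The covariance bound \eqref{eq:crlb} then follows by invoking the complex CRLB for the augmented error $\underline{\hat{\bm\theta}}-\underline{\bm\theta}$ and reading off its leading $N\times N$ diagonal block, with $\bm{F}^\dagger$ replacing $\bm{F}^{-1}$ because, as shown next, $\bm{F}$ is singular.

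The final and most interesting step is the rank-deficiency claim. The key observation is that every admissible measurement is invariant to a global phase rotation, $(e^{j\phi}\bm{v})^\ccalH\bm{H}_\ell(e^{j\phi}\bm{v}) = \bm{v}^\ccalH\bm{H}_\ell\bm{v}$ for all $\phi$ and all $\ell$, so $\bm{v}$ is identifiable only up to a global phase. Differentiating the orbit $e^{j\phi}\bm{v}$ at $\phi=0$ produces the augmented direction $\bm{n}:=[\bm{v}^\ccalT~-\overline{\bm{v}}^\ccalT]^\ccalT$, and I would verify directly that for every $\ell$,
\[
\bm{a}_\ell^\ccalH\bm{n} = \bm{v}^\ccalH\bm{H}_\ell^\ccalH\bm{v} - \overline{\bm{v}}^\ccalH(\overline{\bm{H}}_\ell)^\ccalH\overline{\bm{v}} = g_\ell - \overline{g_\ell} = 0 ,
\]
where Hermiticity of $\bm{H}_\ell$ (so $\bm{H}_\ell^\ccalH=\bm{H}_\ell$ and $(\overline{\bm{H}}_\ell)^\ccalH=\bm{H}_\ell^\ccalT=\overline{\bm{H}}_\ell$) and the reality of $g_\ell$ are used. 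Hence $\bm{F}\bm{n}=\sum_\ell\sigma_\ell^{-2}\bm{a}_\ell(\bm{a}_\ell^\ccalH\bm{n})=\bm{0}$ irrespective of which Hermitian measurement matrices are selected, giving the advertised rank-$1$ deficiency even when all SCADA data are available.

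I expect the main obstacle to be bookkeeping rather than conceptual: keeping the Wirtinger derivatives and the two conjugate blocks consistent so that $\bm{F}$ assembles into precisely \eqref{eq:fim}, and tracking the Hermiticity identities such as $(\overline{\bm{H}}_\ell)^\ccalH=\overline{\bm{H}}_\ell$ and $\overline{\bm{v}}^\ccalH\overline{\bm{H}}_\ell\overline{\bm{v}}=\overline{g_\ell}=g_\ell$ carefully in the null-space step. Once the conjugate symmetry of $\bm{a}_\ell$ is set up correctly, both the FIM formula and the phase-invariance null vector fall out cleanly, and the CRLB inequality is then a direct appeal to the standard complex-domain bound.
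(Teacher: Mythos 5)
Your proof is correct, and it arrives at the same FIM \eqref{eq:fim} and the same null vector as the paper, but through a genuinely different computation. The paper obtains $\bm{F}$ as the \emph{expected complex Hessian} of the negative log-likelihood: it works out all four Wirtinger Hessian blocks \eqref{eq:blk11}--\eqref{eq:blk22}, each of which carries residual terms $\phi_\ell(\bm{v},\overline{\bm{v}})\bm{H}_\ell$, and only after evaluating at the true $\bm{v}$ and taking the expectation (using $\mathbb{E}[\phi_\ell]=0$) do those terms drop and the outer-product structure emerge. You instead obtain $\bm{F}$ as the \emph{covariance of the Wirtinger score}, $\bm{F}=\mathbb{E}[\bm{\rho}\bm{\rho}^\ccalH]$ with $\bm{\rho}=\sum_{\ell}(\eta_\ell/\sigma_\ell^2)\bm{a}_\ell$, so independence and zero mean of the $\{\eta_\ell\}$ collapse the double sum to $\sum_\ell\sigma_\ell^{-2}\bm{a}_\ell\bm{a}_\ell^\ccalH$ in one line; this needs only first-order derivatives and never touches the $\phi_\ell$-dependent curvature terms. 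The two routes coincide by the standard information identity (score covariance equals expected Hessian of $-\ln p$), which holds for this Gaussian model, so your shortcut is legitimate; what the paper's longer route buys is the explicit Hessian expressions themselves, and your route buys brevity. On the rank deficiency, your verification that $\bm{a}_\ell^\ccalH\bm{n}=0$ for $\bm{n}=[\bm{v}^\ccalT~-\overline{\bm{v}}^\ccalT]^\ccalT$ is the same algebra as the paper's $\bm{g}_\ell^\ccalH\bm{d}=0$, but you add the conceptual origin of this vector---it is the tangent to the global phase orbit $e^{j\phi}\bm{v}$ at $\phi=0$---which explains precisely why the deficiency persists for \emph{every} selection of quadratic SCADA measurements, a point the paper leaves implicit. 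Finally, both you and the paper treat the bound \eqref{eq:crlb} itself (with $\bm{F}^\dagger$ replacing the inverse for the singular FIM) as an appeal to the standard complex CRLB and pseudo-inverse results, so your level of rigor there matches the paper's.
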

	
	The proof of Prop. \ref{prop:crlb} is deferred to the Appendix. Even though the Fisher information matrix (FIM) in \eqref{eq:fim} 
	is rank deficient, the pseudo-inverse of FIM qualifies itself as a valid yet generally looser lower bound on the mean-square error (MSE) of any unbiased estimator \cite{tsp2001sm}. This lower bound is often attainable in practice, and is predictive of optimal estimator performance~\cite{tsp2001sm}, as will be demonstrated by our numerical tests in Sec. \ref{sec:test}. The derived CRLB in \eqref{eq:crlb} will be employed to benchmark and compare performance of different PSSE solvers next.
	
	\section{Simulated Tests}\label{sec:test}
	
	In the section, we compare the proposed FPP-based solvers in Alg. \ref{alg:fpp}
with existing alternatives including the WLS via Gauss-Newton iterations (GN-based), and the SDR-based solver (SDR-based)~\cite{jstsp2014zhu} for both power flow and PSSE tasks on several IEEE benchmark systems~\cite{PSTCA}. Throughout, all reported numerical results were obtained by averaging over $100$ independent Monte Carlo realizations.
	The three PSSE solvers from noisy measurements are compared in terms of the mean-square error $\sum_{i=1}^{100}\left\|\hat{\bm{v}}_i-\bm{v}\right\|_2^2/100$, where $\hat{\bm{v}}_i$ is the returned estimate at the $i$-th realization, and $\bm{v}$ the actual voltage profile.
	In the absence of noise, performance of the power flow solvers is assessed through the empirical success rate over $100$ trials. A success is declared for a trial if the returned power flow solution $\hat{\bm{v}}$ incurs a
	relative violation on the given set of $L$ power flow equations, given by $\sum_{\ell=1}^{L}(z_\ell-\hat{\bm{v}}^\ccalH\bm{H}_\ell\hat{\bm{v}})^2/\sum_{\ell=1}^Lz_\ell^2$ less than $10^{-3}$. 
	(The reason why $\|\bm{v}-\hat{\bm{v}}\|_2^2$ is not used is due to existence of possibly multiple solutions $\bm{v}$ satisfying the set of power flow equations.)

		\begin{table}[t]
			\centering
			\caption{Empirical success rate on IEEE test systems with $\theta=0.1\pi$.}
			\begin{tabular}{l|c|c|c|c|c|c}
				\hline
				Test case & $5$-bus & $9$-bus & $14$-bus & $24$-bus & $30$-bus & $39$-bus \\
				\hline
				FPP-based    & $100\%$       & $100\%$     & $100\%$      & $100\%$     & $100\%$      &$100\%$  \\
				\hline
				SDR-based    & $0$      & $29\%$   & $40\%$  & $2\%$   & $94\%$   & $97\%$  \\
				\hline
				GN-based     & $100\%$       & $100\%$   & $100\%$   & $100\%$   &$87\%$  & $64\%$  \\
				\hline
			\end{tabular}%
			\label{tab:succ1}\vspace{-.3em}
		\end{table}%
		
		\begin{table}[t] 
			\centering
			\vspace{-.4em}
			\caption{Empirical success rate on IEEE test systems with $\theta=0.3\pi$.}
			\begin{tabular}{l|c|c|c|c|c|c}
				\hline
				Test case & $5$-bus& $9$-bus & $14$-bus & $24$-bus & $30$-bus & $39$-bus \\
				\hline
				FPP-based    & $100\%$       & $100\%$     & $100\%$      & $100\%$     & $100\%$      &$100\%$  \\
				\hline
				SDR-based    & $4\%$      & $33\%$   & $15\%$  & $10\%$    & $0$   & $0$  \\
				\hline
				GN-based    & $100\%$        & $55\%$   & $33\%$   & $21\%$   &$0$  & $5\%$  \\
				\hline
			\end{tabular}%
			\vspace{-1em}
			\label{tab:succ2}
		\end{table}%
	
	Different system quantities and voltage profiles were generated via the MATLAB-based toolbox MATPOWER~\cite{MATPOWER}. The Gauss-Newton method was implemented using the SE function `doSE.m' in MATPOWER, which was modified to terminate either upon convergence, or, when the condition number of the approximate linearization exceeds $10^{5}$ flagging explosion of the iterates~\cite{jstsp2014zhu}. The SDR- and FPP-based solvers were realized via the optimization modeling package YALMIP~\cite{YALMIP}, as well as the interior-point solver SeDuMi~\cite{sedumi}.  Furthermore, the flat-voltage profile point was used as the initial guess for the Gauss-Newton and FPP approaches.
	In order to fix the phase ambiguity, the phase generated at the reference bus is set to $0$ in all tests. The FPP solver stops either when a maximum number $100$ of iterations are reached, or when the  objective value  improvement between two consecutive iterations becomes smaller $10^{-5}$.
	All experiments were conducted on an Intel CPU @ $3.4$ GHz ($32$ GB RAM) computer.
	
	To evaluate the performance of the FPP-based solver for power flow analysis, the first experiment simulates noiseless data corresponding to the classical power flow problem. That is, a total of $L=2N-1$ system variables were specified at the PV, PQ, and slack buses to solve for $2N-1$ real-valued unknowns in $\bm{v}\in\mathbb{C}^N$ with the reference bus's phase fixed at $0$. The actual voltage magnitude of each bus was uniformly distributed over $[0.9,\,1.1]$, and its angle over $[-\theta,\,\theta]$ with $\theta=0.1\pi$ and $0.3\pi$.
	Empirical success rate results on several IEEE benchmark systems were reported in Tables~\ref{tab:succ1} and \ref{tab:succ2} for $\theta=0.1\pi$ and $0.3\pi$, respectively. Apparently, our developed FPP-based power flow solver solves exactly the classical power flow problem in all simulated tests, while the SDR-based one fails with high probability. The Gauss-Newton method performs well when the initial point lies close to the actual solution due to small $\theta$ in Table \ref{tab:succ1}, while it diverges frequently for large $\theta$ values in Table \ref{tab:succ2}.

	%

	\begin{figure}[t]
				\vskip -0.1in
		\centering
		\includegraphics[scale = .6]{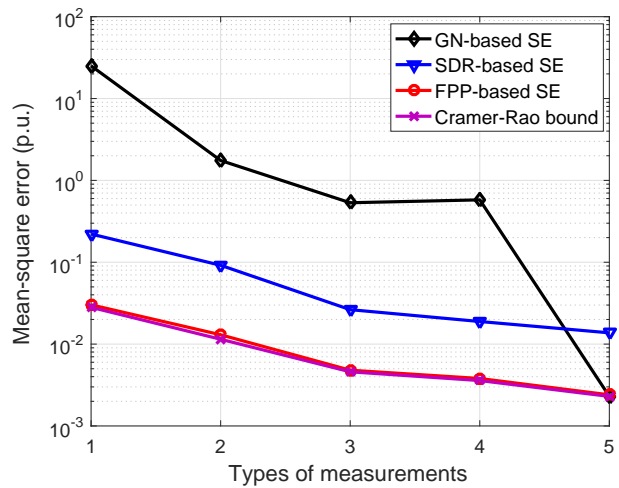} 
		\caption{MSEs as well as CRLB versus types of measurements used on the IEEE $14$-bus test system using: i) Gauss-Newton based SE; ii) SDR-based SE; and iii) FPP-based SE.
		}
		\label{fig:14bus}
		\vskip -.05in
	\end{figure}

	The second experiment compares the MSE performance of various approaches relative to the analytical Cram{\' e}r-Rao bound in \eqref{eq:crlb} on the IEEE $14$-bus test system~\cite{PSTCA}.
	The actual voltage magnitude and angle of each bus were generated uniformly over $[0.9,\,1.1]$, and $[-0.4\pi,\,0.4\pi]$, respectively.
	Initially, all voltage magnitudes as well as all sending-end and receiving-end active power flow were taken, which corresponds to the base case $3$ in the $x$-axis of Fig. \ref{fig:14bus}.
		To demonstrate the SE performance evolution of various approaches with respect to the increasing number of measurements, additional types of measurements were included in a deterministic manner detailed next.
		All seven types of SCADA measurements in \eqref{eq:hv}-\eqref{eq:hn1} were ordered as
		 $\{|V_k|^2,P_{mn}^f,P_{mn}^t,Q_{mn}^f,Q_{mn}^t,P_n,Q_n\} $.
		 Each $x$-axis value in Fig. \ref{fig:14bus} implies that the number of ordered types of measurements was used in the experiment to obtain the mean-square errors. For instance, $5$ on the $x$-axis corresponds to the case where the first $5$ types of measurements (i.e., all $|V_k|^2,P_{mn}^f,P_{mn}^t,Q_{mn}^f,Q_{mn}^t$) were used; and likewise for all other $x$-axis values.
	Measurement noise was randomly and independently generated from Gaussian distribution having zero-mean and standard deviation $ 0.1 $.
	 The SDR estimator was recovered from the SDR solution by picking the minimum-cost vector over the eigenvector
	and $5,000$ zero-mean Gaussian randomizations with covariance matrix being the SDR solution.
	The MSE as well as the CRLB versus the types of measurements available are shown in Fig.~\ref{fig:14bus}, corroborating the near-optimal performance relative to the CRLB and robustness of our developed FPP-based PSSE solver. 

	The last experiment on the IEEE $30$-bus benchmark system simulates a high signal-to-noise ratio and complete-data scenario, where all voltage magnitude as well as all active power flow at both sending- and receiving-ends were measured to be advantageous to the SDR-based method \cite{psse2016zhang}. Independent zero-mean Gaussian noise was assumed to have standard deviations $0.05$ for power measurements and $0.02$ for voltage measurements. The actual voltage magnitude and angle of each bus were generated uniformly at random over $[0.9,\,1.1]$, and $[-0.4\pi,\,0.4\pi]$, respectively.
	 Fig.~\ref{fig:30bus} depicts the average magnitude and angle estimation errors of three PSSE schemes across buses.
	The curves in Fig.~\ref{fig:30bus} demonstrate the merits of the FPP-based PSSE solver in this scenario. 

	\begin{figure}[t]
		\vskip -0.1in
		\centering
		\includegraphics[scale = 0.6]{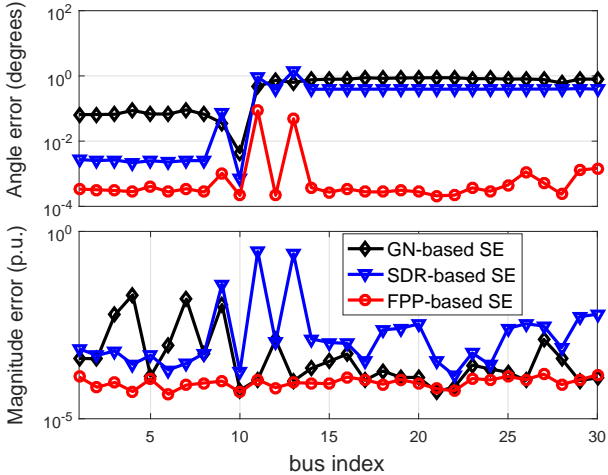} 
		\caption{Magnitude and angle estimation errors at each bus on the IEEE $30$-bus benchmark system using: i) Gauss-Newton based SE; ii) SDR-based SE; and iii) FPP-based SE.
		}
		\label{fig:30bus}
		\vskip -0.1in
	\end{figure}

	\section{Conclusions}
	
	Motivated by the inherent nonconvexity of the power flow and PSSE tasks and leveraging recent advances in handling nonconvex QCQPs,
	this work first reformulated power flow and PSSE as a nonconvex QCQP. The resulting nonconvex QCQP  was subsequently solved by the FPP algorithm. The novel FPP-based solvers were shown to converge to a stationary point of the WLS formulation. To fairly compare different PSSE solvers from noisy data, the CRLB for PSSE assuming an AWGN model was derived based on Wirtinger's calculus for functions over complex domains.
	Extensive numerical tests showed markedly improved performance of our FPP-based solver for both power flow and PSSE tasks at the price of increased runtime over competing Gauss-Newton- and SDR-based alternatives on a variety of IEEE test systems.
	
Pertinent future research directions include developing distributed implementations for large-scale power networks by exploiting the natural low-rank and sparsity structure present in the coefficient matrices $\{\bm{H}_\ell\}$. Another possibility consists of leveraging state-of-the-art approaches for tackling random quadratic systems of equations to solve the power flow and PSSE problems~\cite{taf}. 
	Generalizing feasible point pursuit algorithms to other nonconvex power grid control tasks such as stochastic energy management \cite{tps2015wang}, and distribution system-level power flow and PSSE~\cite{icassp2017kekatos} constitute meaningful directions for future research as well.

	\section{Appendix}
	
		\emph{Proof of Proposition \ref{prop:crlb}:} For the AWGN model in \eqref{eq:noisy} with $\bm{\eta}\sim \mathcal{N}(\bm{0},\diag(\bm{\sigma^2}))$, 	the data likelihood can be written as
	\begin{equation*}
	p(\bm{z};\bm{v})=\prod_{\ell=1}^L\frac{1}{\sqrt{2\pi \sigma_\ell^2}}
	\exp\bigg[-\frac{\left(z_\ell-\bm{v}^\ccalH\bm{H}_\ell\bm{v}\right)^2}{2\sigma_\ell^2}\bigg]
	\end{equation*}
	and the negative log-likelihood  $f(\bm{v})=-\ln p(\bm{z};\bm{v})$ is
	\begin{equation}\label{eq:negative}
	 f(\bm{v})=\sum_{\ell=1}^L\left[\frac{1}{2\sigma_\ell^2}\left(z_\ell-\bm{v}^\ccalH\bm{H}_\ell\bm{v}\right)^2+\frac{1}{2}\ln\left(2\pi\sigma_\ell^2\right)\right].
	\end{equation}
	
	The Fisher information matrix is defined as the Hessian of the objective function $f(\bm{v})\in\mathbb{R}$ with respect to the variable vector $\bm{v}\in\mathbb{C}^{N}$. So the task of deriving the Cram{\' e}r-Rao bound amounts to finding the Hessian of a real-valued function with respect to a complex-valued vector. Recall from \emph{Wirtinger's calculus} that $f(\bm{v})$ can be equivalently rewritten as $f(\bm{v},\overline{\bm{v}})$~\cite{wirtinger}. Upon introducing the conjugate coordinates $[\bm{v}^\ccalT~\overline{\bm{v}}^\ccalT]^\ccalT\in\mathbb{C}^{2N}$, the so-called \emph{Wirtinger derivative} is  \cite{wirtinger}
	\begin{subequations}
		\begin{align*}
		\frac{\partial f}{\partial \bm{v}}&:=\left.\frac{\partial f(\bm{v},\overline{\bm{v}})}{\partial\bm{v}^\ccalT}\right|_{\overline{\bm{v}}={\rm constant}}
		=\left.\left[\frac{\partial f}{\partial v_1}~\cdots~\frac{\partial f}{\partial v_N}\right]\right|_{\overline{\bm{v}}={\rm constant}}
		\\
		\frac{\partial f}{\partial \overline{\bm{v}}}&:=\left.\frac{\partial f(\bm{v},\overline{\bm{v}})}{\partial\overline{\bm{v}}^\ccalT}\right|_{\bm{v}={\rm constant}}	 =\left.\left[\frac{\partial f}{\partial \overline{v}_1}~\cdots~\frac{\partial f}{\partial \overline{v}_N}\right]\right|_{{\bm{v}}={\rm constant}}.
		\end{align*}	
	\end{subequations}
	Our definitions here follow the convention in multivariate calculus that derivatives are denoted by row vectors, and gradients by column vectors.
	For brevity, let $
	\phi_\ell (\bm{v}, \overline{\bm{v}}): = z_\ell - \overline{\bm{v}}^\ccalT\bm{H}_\ell\bm{v}$.
	Accordingly, the derivatives of $f$ in \eqref{eq:negative} can be obtained as
	\begin{subequations}\label{eq:derivatives}
		\begin{align}
		\frac{\partial f}{\partial \bm{v}} & = \sum_{\ell = 1}^{L} \frac{1}{\sigma_\ell^2} \phi_\ell (\bm{v}, \overline{\bm{v}}) \frac{\partial \phi_\ell (\bm{v}, \overline{\bm{v}})}{\partial \bm{v}^\ccalT}\\
		\frac{\partial f}{\partial \overline{\bm{v}}} & = \sum_{\ell = 1}^{L} \frac{1}{\sigma_\ell^2} \phi_\ell (\bm{v}, \overline{\bm{v}})\frac{\partial \phi_\ell (\bm{v}, \overline{\bm{v}})}{\partial \overline{\bm{v}}^\ccalT}
		\end{align}
	\end{subequations}
	where the partial derivatives of $ \phi_\ell $ can be found as
	\begin{subequations}\label{eq:inner}
		\begin{align}
		\frac{\partial\phi_\ell(\bm{v}, \overline{\bm{v}})}{\partial \bm{v}^\ccalT} &= -\overline{\bm{v}}^\ccalT\bm{H}_\ell=-(\bm{H}_\ell \bm{{v}})^\ccalH\label{eq:pv}\\  \frac{\partial\phi_\ell(\bm{v}, \overline{\bm{v}})}{\partial \overline{\bm{v}}^\ccalT} &=-\bm{v}^\ccalT\bm{H}_\ell^\ccalT= -(\overline{\bm{H}}_\ell \overline{\bm{v}})^\ccalH.\label{eq:pvb}
		\end{align}
	\end{subequations}

	In the conjugate coordinate system, the complex Hessian is defined as
	\begin{equation}\label{eq:hessian}
	{\boldsymbol{\mathcal{H}}}:=\nabla^2 f=\left[\begin{array}
	{cc}
	{\boldsymbol{\mathcal{H}}}_{\bm{v}\bm{v}}&{\boldsymbol{\mathcal{H}}}_{\overline{\bm{v}}\bm{v}}\\
	 {\boldsymbol{\mathcal{H}}}_{\bm{v}\overline{\bm{v}}}&{\boldsymbol{\mathcal{H}}}_{\overline{\bm{v}}\overline{\bm{v}}}
	\end{array}\right]
	\end{equation}
	whose blocks are given by
	\begin{align*}
	&	{\boldsymbol{\mathcal{H}}}_{\bm{v}\bm{v}}:=\frac{\partial}{\partial \bm{v}^\ccalT}\left(\frac{\partial f}{\partial \bm{v}}
	\right)^\ccalH, \quad {\boldsymbol{\mathcal{H}}}_{\overline{\bm{v}}\bm{v}}:=\frac{\partial}{\partial \overline{\bm{v}}^\ccalT}
	\left(\frac{\partial f}{\partial \bm{v}}
	\right)^\ccalH\\
	&	{\boldsymbol{\mathcal{H}}}_{\bm{v}\overline{\bm{v}}}:=\frac{\partial}{\partial \bm{v}^\ccalT}\left(\frac{\partial f}{\partial \overline{\bm{v}}}
	\right)^\ccalH,\quad
	{\boldsymbol{\mathcal{H}}}_{\overline{\bm{v}}\overline{\bm{v}}}:=\frac{\partial}{\partial \overline{\bm{v}}^\ccalT}\left(\frac{\partial f}{\partial \overline{\bm{v}}} \right)^\ccalH.
	\end{align*}
	After substituting \eqref{eq:derivatives} and \eqref{eq:inner} into the last equations, and
	with some tedious algebraic manipulations,
	the first block of $ {\boldsymbol{\mathcal{H}}} $ can be obtained as
	\begin{align}
	{\boldsymbol{\mathcal{H}}}_{\bm{v}\bm{v}} =& \frac{\partial}{\partial \bm{v}^\ccalT} \Big(\sum_{\ell = 1}^{L} \frac{-1}{\sigma_\ell^2} \phi_\ell (\bm{v}, \overline{\bm{v}}) \bm{H}_\ell \bm{{v}}\Big)\nonumber\\
	=& \sum_{\ell = 1}^{L} \frac{1}{\sigma_\ell^2} \Big(\bm{H}_\ell \bm{v} (\bm{H}_\ell \bm{v})^\ccalH- \phi_\ell (\bm{v}, \overline{\bm{v}}) \bm{H}_\ell\Big)\label{eq:blk11}.
	\end{align}
	The other blocks can be derived in a similar fashion. Upon omitting algebraic details, the remaining three blocks can be obtained as follows
	\begin{align}
	{\boldsymbol{\mathcal{H}}}_{\overline{\bm{v}}\bm{v}}
	&=\sum_{\ell = 1}^{L} \frac{1}{\sigma_\ell^2}  \bm{H}_\ell \bm{v} (\overline{\bm{H}}_\ell \overline{\bm{v}})^\ccalH\label{eq:blk12}\\
	{\boldsymbol{\mathcal{H}}}_{\bm{v}\overline{\bm{v}}}
	&= \sum_{\ell = 1}^{L} \frac{1}{\sigma_\ell^2}  \overline{\bm{H}}_\ell \overline{\bm{v}} (\bm{H}_\ell \bm{v})^\ccalH\label{eq:blk21}\\
	{\boldsymbol{\mathcal{H}}}_{\overline{\bm{v}}\overline{\bm{v}}}& =\sum_{\ell = 1}^{L} \frac{1}{\sigma_\ell^2} \Big(\overline{\bm{H}}_\ell \overline{\bm{v}} (\overline{\bm{H}}_\ell \overline{\bm{v}})^\ccalH- \phi_\ell (\bm{v}, \overline{\bm{v}}) \overline{\bm{H}}_\ell\Big)\label{eq:blk22}.
	\end{align}

	Evaluating the Hessian $ {\boldsymbol{\mathcal{H}}}$ in \eqref{eq:hessian} [and its blocks in \eqref{eq:blk11}-\eqref{eq:blk22}] at the true value of $\bm{v}$, and taking the expectation with respect to the noise vector $\bm{\eta}$, it is easy to verify that $ \mathbb{E} \left[ \phi_\ell (\bm{v}, \overline{\bm{v}}) \right]=0 $.
	Hence, the $\phi_\ell$-related terms disappear, so the FIM $\bm{F} := \mathbb{E} [ {\boldsymbol{\mathcal{H}}} ]\in\mathbb{C}^{2N\times 2N}$ can be expressed as \cite{van1994cramer}
	\begin{align}
	\bm{F} &=\left[
	\begin{array}{ll}
	\sum_{\ell = 1}^{L} \bm{H}_\ell \bm{v} (\bm{H}_\ell \bm{v})^\ccalH\big/{\sigma_\ell^2}&\sum_{\ell = 1}^{L}  \bm{H}_\ell \bm{v} (\overline{\bm{H}}_\ell \overline{\bm{v}})^\ccalH\big/{\sigma_\ell^2} \\
	\sum_{\ell = 1}^{L} \overline{ \bm{H}}_\ell\overline{ \bm{v}} (\bm{H}_\ell \bm{v})^\ccalH\big/{\sigma_\ell^2} &\sum_{\ell = 1}^{L}  \overline{\bm{H}}_\ell \overline{\bm{v}} (\overline{\bm{H}}_\ell \overline{\bm{v}})^\ccalH\big/{\sigma_\ell^2}
	\end{array}	\right]\nonumber\\
	&=\sum_{\ell=1}^{L}\bm{g}_\ell\bm{g}_\ell^\ccalH\buildrel\triangle\over =
	\bm{G}\bm{G}^\ccalH
	\end{align}
	where $\bm{G}:=[\bm{g}_1~\cdots~\bm{g}_L]\in\mathbb{C}^{2N\times L}$ is introduced to show the rank-deficiency of $\bm{F}$, whose $\ell$-th column is given as
	\begin{equation}
	\bm{g}_\ell:=\left[\begin{array}{c}
	\bm{H}_\ell\bm{v}/\sigma_\ell\\
	\overline{\bm{H}}_\ell\overline{\bm{v}}/\sigma_\ell
	\end{array}
	\right]=
	\left[\begin{array}{cc}
	\bm{H}_\ell/\sigma_\ell&\bm{0}\\
	\bm{0}&\overline{\bm{H}}_\ell/\sigma_\ell
	\end{array}
	\right]\left[\begin{array}{c}
	\bm{v}\\
	\overline{\bm{v}}
	\end{array}
	\right].
	\end{equation}
	
	To demonstrate the rank-$1$ deficiency of $\bm{F}$, it suffices to find a nonzero vector $\bm{d}\in\mathbb{C}^{2N}$ such that $\bm{F}\bm{d}=\bm{0}$. To this end, consider the vector $\bm{d}:=\left[\bm{v}^\ccalT~-\overline{\bm{v}}^\ccalT\right]^\ccalT\ne \bm{0}$. It is straightforward to check that for all $\ell=1,\,2,\,\ldots,\,L$
	\begin{align*}
	\bm{g}_\ell^\ccalH\bm{d}&=\Big[
	\bm{v}^\ccalH\bm{H}_\ell/\sigma_\ell ~~\overline{\bm{v}}^\ccalH
	\overline{\bm{H}}_\ell/\sigma_\ell
	\Big]\left[\begin{array}{c}
	\bm{v}\\
	-\overline{\bm{v}}
	\end{array}
	\right]
	=0
	\end{align*}
	therefore giving rise to $\bm{F}\bm{d}=\sum_{\ell=1}^{L}\bm{g}_\ell\left(\bm{g}_\ell^\ccalH\bm{d}\right)=\bm{0}$.
	That is, for any nonzero $\bm{v}$, there always exists a nonzero vector $\bm{d}=\left[\bm{v}^\ccalT~-\overline{\bm{v}}^\ccalT\right]^\ccalT$ lying in the null space of $\bm{F}$, hence verifying the rank-$1$ deficiency of $\bm{F}$. This concludes the proof.

	\IEEEtriggeratref{30}
	\bibliographystyle{IEEEtran}
	\bibliography{power}

\end{document}